\newtheorem{theorem}{Theorem}
\newtheorem{corollary}[theorem]{Corollary}
\newenvironment{proof}[1][Proof]{\noindent\textbf{#1.} }{\ \rule{0.5em}{0.5em}}
\begin{document}
\title{Exploration of nonlocalities in ensembles consisting of bipartite quantum states}
\author{Ming-Yong Ye}
\affiliation{Department of Physics and Center of Theoretical and Computational Physics,
University of Hong Kong, Pokfulam Road, Hong Kong, China}
\affiliation{School of Physics and Optoelectronics Technology, Fujian Normal University,
Fuzhou 350007, China}
\author{Yan-Kui Bai}
\affiliation{College of Physical Science and Information Engineering and Hebei Advance Thin
Films Laboratory, Hebei Normal University, Shijiazhuang, Hebei 050016, China}
\author{Xiu-Min Lin}
\affiliation{School of Physics and Optoelectronics Technology, Fujian Normal University,
Fuzhou 350007, China}
\author{Z. D. Wang}
\affiliation{Department of Physics and Center of Theoretical and Computational Physics,
University of Hong Kong, Pokfulam Road, Hong Kong, China}

\begin{abstract}
It is revealed that ensembles consisting of multipartite quantum states can
exhibit different kinds of nonlocalities. An operational measure is introduced
to quantify nonlocalities in ensembles consisting of bipartite quantum states.
Various upper and lower bounds for the measure are estimated and the exact
values for ensembles consisting of mutually orthogonal maximally entangled
bipartite states are evaluated.

\end{abstract}

\pacs{03.65.Ud, 03.67.Hk}
\maketitle

\textit{Introduction.}--Although mutually orthogonal multipartite quantum
states are always distinguishable through joint measurements, it was found by
Bennett \textit{et al. }in 1999 that there are ensembles of mutually
orthogonal bipartite product states that cannot be distinguished by means of
local operations and classical communication (LOCC); this phenomenon was
referred to as the nonlocality without entanglement~\cite{bennett}. The
essence of this nonlocality is that the maximal information achievable through
LOCC to distinguish an ensemble of mutually orthogonal multipartite states may
be strictly less than that could be obtained through joint measurements. Since
the discovery of this phenomenon, substantial efforts have been devoted to
search the conditions under which a given ensemble can exhibit such a kind of
nonlocality \cite{ye,cc1,c1,c2,c3,c4,c5,c6,c7,4bell,2state,cohen}. As is
known, any ensemble consisting of the four Bell states can exhibit the
nonlocality \cite{4bell}, while the ensembles consisting of only two
orthogonal states cannot~\cite{2state}. As a latest result, it was indicated
recently that the nonlocality exists in almost all ensembles consisting of
more than $d$ mutually orthogonal $d^{\otimes n}$ states~\cite{cohen}.

So far, most investigations on this intriguing ensemble nonlocality have been
made on qualitative descriptions in analogy to quantum entanglement judgment
(i.e., finding conditions that can be used to check whether a given ensemble
can exhibit the nonlocality), while the quantification of the nonlocality and
its implications are still awaited. To our knowledge, quantification of the
nonlocalities in ensembles has only been addressed in detail in Ref.
\cite{quanti ensem}, though how to quantify quantum entanglement has been
intensively studied.

In this paper we explore nonlocalities in ensembles consisting of bipartite
quantum states and introduce an operational measure for them. The measure is
defined through considering the tensor power\ $\varepsilon^{\otimes n}$,
rather than the ensemble $\varepsilon$ itself, such that some results from
information theory can be used directly. Our discussions focus mainly on
ensembles whose states are mutually orthogonal, motivated by the following two
questions: if an ensemble of mutually orthogonal bipartite states cannot be
distinguished by LOCC, how much entanglement in addition to LOCC is needed
\cite{cohen2}? and if they can be distinguished by LOCC, how much entanglement
can be distilled in the process of locally distinguishing?

\textit{Entanglement charge.--}LOCC distinguishing the states in the ensemble
$\varepsilon=\left\{  p_{X},\rho_{X}^{AB}\right\}  $ can be conceived as a
game. Suppose that there is a classical information source producing symbol
$X$ with probability $p_{X}$. If the source outputs symbol $X$, Alice and Bob
will be given a quantum state $\rho_{X}^{AB}$. They know the ensemble
$\varepsilon$ and their task is to determine the value of $X$ via a
measurement implemented through LOCC. How much information they have gained
about the value of $X$ can be described by the mutual information between $X$
and the measurement result $Y$,
\begin{equation}
I\left(  X;Y\right)  =H\left(  X\right)  +H\left(  Y\right)  -H\left(
XY\right)  ,
\end{equation}
where $H\left(  \cdot\right)  $ is the Shannon entropy of the random
variables. The value of $X$ can be determined through the measurement result
$Y$ if and only if $I\left(  X;Y\right)  =H\left(  X\right)  $ \cite{book}.
The maximal mutual information achievable through LOCC is called locally
accessible information and it will be denoted by $I^{LOCC}\left(
\varepsilon\right)  $. Similarly we can define $I^{Global}\left(
\varepsilon\right)  $ which is the maximal mutual information achievable
through joint measurements and will be equal to $H\left(  X\right)  $ when the
states are mutually orthogonal. Generally there is $I^{LOCC}\left(
\varepsilon\right)  \leq I^{Global}\left(  \varepsilon\right)  \leq H\left(
X\right)  $.

The tensor power of the ensemble $\varepsilon=\left\{  p_{X},\rho_{X}%
^{AB}\right\}  $ is denoted as $\varepsilon^{\otimes n}=\left\{  p_{X^{n}%
},\rho_{X^{n}}^{A^{n}B^{n}}\right\}  $, where $p_{X^{n}}=p_{X_{1}}p_{X_{2}%
}\cdots p_{X_{n}}$, $\rho_{X^{n}}^{A^{n}B^{n}}=\rho_{X_{1}}^{A_{1}B_{1}%
}\otimes\rho_{X_{2}}^{A_{2}B_{2}}\cdots\otimes\rho_{X_{n}}^{A_{n}B_{n}}$ and
$X_{i}$ are independent and identically distributed classical variables as
$X$. Now Alice holds $A^{n}$ and Bob holds $B^{n}$. To obtain the information
about the value of $X^{n}$, they make a measurement that satisfies the
conditions: (1) the mutual information between $X^{n}$ and the measurement
result $Y$ satisfies $I\left(  X^{n};Y\right)  \geq I^{Global}\left(
\varepsilon^{\otimes n}\right)  -\delta_{n}$ with $\lim_{n\rightarrow\infty
}\delta_{n}=0$; (2) it is implemented through LOCC plus $n\times\alpha_{n}$
ebits of entanglement; (3) when the measurement result $Y$ with the
probability $p_{Y}$ is obtained, $n\times\beta_{nY}$ ebits of entanglement is
distilled. We now introduce a new quantity--entanglement charge, which is
defined as%
\begin{equation}
N\left(  \varepsilon\right)  =\inf\lim_{n\rightarrow\infty}\left(  \alpha
_{n}-\sum_{Y}p_{Y}\times\beta_{nY}\right)  ,
\end{equation}
where the infimum operation is taken over all measurements satisfying the
above conditions.
$N\left(  \varepsilon\right)  $ is applicable to ensembles consisting of
general bipartite states and its value can be positive, negative, and zero.

\textit{Two kinds of ensemble nonlocalities and their quantification}-The
ensembles with positive $N\left(  \varepsilon\right)  $ are different from
those with negative $N\left(  \varepsilon\right)  $ in the sense that they can
exhibit different kinds of nonlocalities. We focus on ensembles consisting of
mutually orthogonal bipartite states hereafter. Now there is $I^{Global}%
\left(  \varepsilon^{\otimes n}\right)  =H\left(  X^{n}\right)  $ and the
first condition required for the measurement can be interpreted that the
states in the ensemble $\varepsilon$ will be distinguished with vanishing error.

In the case $N\left(  \varepsilon\right)  >0$, quantum entanglement is needed
in addition to LOCC to distinguish the states in the ensemble $\varepsilon$
with vanishing error. The meaning of the positive entanglement charge
$N\left(  \varepsilon\right)  $ can be manifested through the symbolic
expression%
\begin{equation}
N\left(  \varepsilon\right)  \left[  qq\right]  +LOCC\left\vert _{\varepsilon
}\right.  =>I^{Global}\left(  \varepsilon\right)  , \label{e3}%
\end{equation}
where $\left[  qq\right]  $ means an ebit of quantum entanglement and
$I^{Global}\left(  \varepsilon\right)  =H\left(  X\right)  $. Noting that
$H\left(  X\right)  $ is the information needed to distinguish the states in
$\varepsilon$, the positive $N\left(  \varepsilon\right)  $ quantifies the
minimal nonlocal resources (entanglement) that is needed asymptotically in
addition to LOCC to get the full information $H\left(  X\right)  $ to
distinguish the states in $\varepsilon$. In this case we refer to the
corresponding nonlocality exhibited by ensembles as information nonlocality
and employ $N\left(  \varepsilon\right)  $ as a measure to quantify it.

In the case $N\left(  \varepsilon\right)  <0$, quantum entanglement is not
needed in addition to LOCC to distinguish the states in the ensemble
$\varepsilon$ with vanishing error (if the entanglement is still needed to
assist the process, it could be viewed as a kind of catalyst), and
additionally some entanglement can be distilled. Similarly the meaning of the
negative entanglement charge $N\left(  \varepsilon\right)  $ can be manifested
through the expression
\begin{equation}
LOCC\left\vert _{\varepsilon}\right.  =>I^{Global}\left(  \varepsilon\right)
+\left\vert N\left(  \varepsilon\right)  \right\vert \left[  qq\right]  .
\end{equation}
In this case, the ensemble $\varepsilon$ has no information nonlocality,
however it still exhibits some kind of nonlocality since certain entanglement
can be distilled. Hereafter we may refer to such kind of ensemble nonlocality
as entanglement nonlocality, and employ $\left\vert N\left(  \varepsilon
\right)  \right\vert $ as its measure since it quantifies the maximal quantum
entanglement that can be distilled in the asymptotic limit.

Interestingly, in the case $N\left(  \varepsilon\right)  =0$, the ensemble
$\epsilon$ has neither information nonlocality nor entanglement nonlocality
mentioned above. As a typical example, $N\left(  \varepsilon\right)  =0$ for
ensembles that consist of LOCC distinguishable product states \cite{add}.

\textit{Bounds for entanglement charge.}--Although the entanglement charge
$N\left(  \varepsilon\right)  $ is usually hard to compute, some useful bounds
of it can be obtained. Bennett \textit{et al.} considered how much quantum
transmission is needed to complete a special distinguishing measurement in
\cite{bennett}, where an upper bound for our defined $N\left(  \varepsilon
\right)  $ is implied, i.e., $N\left(  \varepsilon\right)  \leq S\left(
\rho^{A}\right)  $ where $\rho^{A}=Tr_{B}\left(  \sum_{X}p_{X}\rho_{X}%
^{AB}\right)  $ and $S\left(  \cdot\right)  $ is the quantum entropy. This can
be obtained through the protocol that Alice first compresses her state
\cite{compress} and teleports it to Bob \cite{teleport} and Bob then
distinguishes the states locally. The following are more tight upper bounds.

\begin{theorem}
Suppose $\varepsilon=\left\{  p_{X},\rho_{X}^{AB}\right\}  $ is an ensemble
consisting of mutually orthogonal bipartite states. The entanglement charge
will satisfy
\begin{align}
N\left(  \varepsilon\right)   &  \leq S\left(  A\left\vert B\right.  \right)
=S\left(  \rho^{AB}\right)  -S\left(  \rho^{B}\right)  ,\label{e4}\\
N\left(  \varepsilon\right)   &  \leq S\left(  B\left\vert A\right.  \right)
=S\left(  \rho^{AB}\right)  -S\left(  \rho^{A}\right)  , \label{e5}%
\end{align}
where $\rho^{AB}=\sum_{X}p_{X}\rho_{X}^{AB}$, $\rho^{B}=Tr_{A}\rho^{AB}$,
$\rho^{A}=Tr_{B}\rho^{AB}$ and $S\left(  \cdot\right)  $ is the quantum entropy.
\end{theorem}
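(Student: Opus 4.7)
The natural strategy is to exhibit an explicit protocol that achieves each inequality, using state merging (Horodecki--Oppenheim--Winter) as the technical engine. Recall what state merging provides: for $n$ copies of a bipartite state $\rho^{AB}$ with any purifying reference held elsewhere, Alice can transfer her share to Bob using LOCC plus $nS(A|B)+o(n)$ ebits when $S(A|B)\ge 0$, or while distilling $n|S(A|B)|-o(n)$ ebits when $S(A|B)<0$, in such a way that the joint state on Bob's enlarged system and the reference is preserved in trace distance. This is precisely the primitive that matches the definition of $N(\varepsilon)$.

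First I would take the tensor power $\varepsilon^{\otimes n}$ so that Alice and Bob hold $\rho_{X^{n}}^{A^{n}B^{n}}$ for a random $X^{n}\sim p_{X}^{\otimes n}$, and view the classical label $X^{n}$ (together with a purification of each individual $\rho_{X}^{AB}$) as the external reference. Alice and Bob's own average state is $(\rho^{AB})^{\otimes n}$, so I can apply the state-merging protocol associated with $\rho^{AB}=\sum_{X}p_{X}\rho_{X}^{AB}$. After the merging, Bob holds a register that is isometric to Alice's former $A$-share together with his original $B$-share, so his lab effectively contains the unknown state $\rho_{X^{n}}^{AB}$.

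Next, Bob performs the globally optimal measurement on $\{\rho_{X^{n}}^{AB}\}$. Because the ensemble states are mutually orthogonal, this measurement identifies $X^{n}$ with vanishing error, so condition (1) in the definition of $N(\varepsilon)$ is met with $\delta_{n}\to 0$, while conditions (2)--(3) are met with $\alpha_{n}\to\max\{S(A|B),0\}$ and $\sum_{Y}p_{Y}\beta_{nY}\to\max\{-S(A|B),0\}$. Feeding these into the definition of the entanglement charge yields $N(\varepsilon)\le\alpha_{n}-\sum_{Y}p_{Y}\beta_{nY}\to S(A|B)$ in both signs, which is \eqref{e4}. The companion bound \eqref{e5} follows by swapping the roles of Alice and Bob, i.e.\ merging Bob's share into Alice's lab instead.

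The main obstacle, and the point that needs care, is justifying that state merging legitimately applies in this classical--quantum setting: the protocol is normally phrased for many i.i.d.\ copies of a purification of $\rho^{AB}$, whereas in each round the physical state is one of the $\rho_{X}^{AB}$ drawn according to $p_{X}$. The resolution is that the merging isometry acts only on $AB$ and is independent of the reference, so any system correlated with $AB$ --- in particular the classical register holding $X^{n}$ --- is carried over intact to Bob. Formally, one fixes a purification of $\sum_{X}p_{X}|X\rangle\langle X|\otimes\rho_{X}^{AB}$ and invokes state merging verbatim; the fidelity guarantees give the required $o(n)$ control on $\delta_{n}$, $\alpha_{n}$ and $\beta_{nY}$.
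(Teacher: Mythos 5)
Your proposal is correct and follows essentially the same route as the paper: merge Alice's share to Bob via quantum state merging at a net entanglement cost of $S\left(A\left\vert B\right.\right)$ per copy, let Bob measure locally, and swap roles for the second bound. Your added care about treating the classical label register as part of the purifying reference is a sound elaboration of the same argument rather than a different approach.
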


\begin{proof}
The theorem can be derived from the quantum state merging \cite{merge1,merge2}%
. To distinguish the states in the ensemble $\varepsilon^{\otimes n}$, the
part $A^{n}$ on Alice's side can be merged to Bob and then he distinguishes
the states locally. In the process of merging \cite{merge1,merge2}, the net
consumed entanglement can be $S\left(  A\left\vert B\right.  \right)  $ ebits,
so Eq. (\ref{e4}) is obtained. If the part $B^{n}$ on Bob's side is first
merged to Alice and then she distinguishes the states locally, similarly the
net consumed entanglement can be $S\left(  B\left\vert A\right.  \right)  $
ebits, so Eq. (\ref{e5}) is obtained.
\end{proof}

The above upper bounds depend only on the state $\rho^{AB}$, so different
ensembles may have the same upper bound. It is possible to get a smaller bound
if we examine the states in the ensemble carefully since it is possible that
only a part of $A$ needs to be merged to Bob and then the states become LOCC distinguishable.

\begin{theorem}
Suppose $\varepsilon=\left\{  p_{X},\rho_{X}^{AB}\right\}  $ is an ensemble
consisting of mutually orthogonal bipartite pure states. The entanglement
charge will satisfy%
\begin{equation}
N\left(  \varepsilon\right)  \geq\sum p_{X}S\left(  \rho_{X}^{A}\right)
-I_{\rho^{AB}}\left(  A;B\right)  , \label{tt}%
\end{equation}
where $S\left(  \cdot\right)  $ is the quantum entropy and $I_{\rho^{AB}%
}\left(  A;B\right)  =S\left(  \rho^{A}\right)  +S\left(  \rho^{B}\right)
-S\left(  \rho^{AB}\right)  $ is the quantum mutual information with
$\rho^{AB}=\sum_{X}p_{X}\rho_{X}^{AB}$, $\rho^{A}=Tr_{B}\rho^{AB}$, $\rho
^{B}=Tr_{A}\rho^{AB}$ and $\rho_{X}^{A}=Tr_{B}\rho_{X}^{AB}$.
\end{theorem}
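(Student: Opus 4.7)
The plan is to prove the equivalent form of the bound, $\alpha_n-\sum_Y p_Y\beta_{nY}\geq H(X)-\chi_A-\chi_B-\bar E-o(1)$, where $\bar E\equiv\sum_X p_X S(\rho_X^A)$ and $\chi_A,\chi_B$ are the Holevo quantities of the marginal ensembles $\{p_X,\rho_X^A\}$ and $\{p_X,\rho_X^B\}$. The equivalence uses $S(\rho^A)=\chi_A+\bar E$ (and similarly for $B$, exploiting $S(\rho_X^A)=S(\rho_X^B)$ for pure $|\psi_X\rangle$) together with $S(\rho^{AB})=H(X)$ for mutually orthogonal pure states; substituting these into $I_{\rho^{AB}}(A;B)=S(\rho^A)+S(\rho^B)-S(\rho^{AB})$ yields the rewrite.

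The first condition in the definition of $N(\varepsilon)$ guarantees $I(X^n;Y)\geq nH(X)-\delta_n$. I would decompose this accessible information into (i) Alice's local contribution, (ii) Bob's local contribution, and (iii) a residual \emph{joint} contribution obtained through entanglement-assisted operations across the $A$:$B$ cut. Applying the Holevo bound to the marginal ensembles on $A^n$ and $B^n$ --- where the extra ebits, being uncorrelated with $X^n$, carry no Holevo information on either side --- caps (i) by $n\chi_A$ and (ii) by $n\chi_B$.

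For the joint contribution (iii), I would perform a resource count across the Alice--Bob cut. The ebits available to mediate joint access are the $n\alpha_n$ supplied Bell pairs plus the $n\bar E$ intrinsic to the pure states $|\psi_X\rangle^{\otimes n}$ that can be harvested by entanglement concentration once $X^n$ is known; the ebits demanded as distilled output are $n\sum_Y p_Y\beta_{nY}$. Since each ebit crossing the cut can be converted into at most one bit of joint classical information (teleport one qubit of Alice's share to Bob, who then extracts at most one bit by measurement), the joint contribution is bounded by $n(\alpha_n+\bar E-\sum_Y p_Y\beta_{nY})$. Summing (i)--(iii) and comparing against $nH(X)-\delta_n$ yields the desired inequality as $n\to\infty$.

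The main obstacle is the rigorous justification of the ``one ebit, at most one bit'' step for the joint term, since the intrinsic entanglement of $|\psi_X\rangle$ is quantum-correlated with $X$ rather than being an autonomous Bell resource that can be counted additively with the supplied ebits. I anticipate handling this by purifying the ensemble as $|\Psi\rangle_{RAB}=\sum_X\sqrt{p_X}|X\rangle_R|\psi_X\rangle^{AB}$, running the protocol coherently on $|\Psi\rangle^{\otimes n}$ with $R^n$ held by an untouched referee, and then applying the data processing inequality on the $A$:$BR$ bipartition together with a converse bound for entanglement-assisted classical communication, so that each ebit is accounted for exactly once and the intrinsic and supplied contributions combine cleanly into the $\alpha_n+\bar E-\sum_Y p_Y\beta_{nY}$ budget.
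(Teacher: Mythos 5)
Your target inequality and the algebra of the rewrite are correct: using $S(\rho_X^A)=S(\rho_X^B)$ for pure states and $S(\rho^{AB})=H(X)$ for mutually orthogonal states, the form $\alpha_n-\sum_Y p_Y\beta_{nY}\geq H(X)-\chi_A-\chi_B-\bar E-o(1)$ is indeed equivalent to Eq.~(\ref{tt}), and this is exactly the intermediate inequality (\ref{er}) that the paper derives. The gap is in how you get there. Your central step --- splitting $I(X^n;Y)$ into an Alice-local part capped by $n\chi_A$, a Bob-local part capped by $n\chi_B$, and a ``joint'' part capped by $n(\alpha_n+\bar E-\sum_Y p_Y\beta_{nY})$ --- is not a decomposition you have defined, let alone proved. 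For an interactive LOCC protocol the measurement record $Y$ is built up over many rounds of adaptive local operations and classical communication, and there is no canonical way to attribute portions of $I(X^n;Y)$ to ``Alice alone,'' ``Bob alone,'' and ``the ebits.'' The additive cap you assert \emph{is} (a restatement of) the theorem you need: it is the bound of Badzi\c{a}g, Horodecki, Sen(De), and Sen \cite{badziag}, $I^{LOCC}\leq S(\rho^A)+S(\rho^B)-\max_Z\sum_X p_X S(\rho_X^Z)=\chi_A+\chi_B+\bar E$, whose actual proof is a nontrivial entropic induction over LOCC rounds, together with the information--entanglement complementarity of Ref.~\cite{upper}, which is what justifies subtracting the distilled $\sum_Y p_Y\beta_{nY}$. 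Your ``one ebit crossing the cut yields at most one bit of joint information'' count does not substitute for either: in particular the intrinsic entanglement $n\bar E$ is quantum-correlated with $X^n$ (you cannot ``harvest it once $X^n$ is known'' inside a converse bound whose whole purpose is to limit how well $X^n$ can be learned), and treating it as a free-standing Bell-pair budget additive with the supplied $n\alpha_n$ ebits is exactly the step you flag as an obstacle but do not close; the purification-plus-data-processing sketch at the end is a direction, not an argument.

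The paper's own proof sidesteps all of this by absorbing the supplied entanglement into the ensemble --- replacing ``LOCC plus $n\alpha_n$ ebits'' by plain LOCC acting on $\left\{p_{X^n},\rho_{X^n}^{A^nB^n}\otimes\Phi_n^{A_0B_0}\right\}$ with $S(\Phi_n^{A_0})=n\alpha_n$ --- and then directly invoking the known bounds of Refs.~\cite{badziag,upper} on that augmented ensemble. Since $\Phi_n$ is independent of $X^n$, its contribution to the right-hand side of the Badzi\c{a}g et al.\ bound is a single $+n\alpha_n$, and the complementarity with distilled entanglement supplies the $-n\sum_Y p_Y\beta_{nY}$; combining with $I(X^n;Y)\geq nS(\rho^{AB})-\delta_n$ gives (\ref{er}) immediately. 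To repair your proof, replace the heuristic three-way decomposition with this reduction and cite the two known results; attempting to re-derive them from a per-ebit accounting is where your argument would fail.
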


The quantum mutual information $I_{\rho^{AB}}\left(  A;B\right)  $ is always
nonnegative and it can be regarded as a quantification of the total
correlation between $A$ and $B$, so Eq. (\ref{tt}) means that the entanglement
charge is not smaller than the average entanglement of the states in the
ensemble minus the total correlation between $A$ and $B$. When $\rho^{AB}%
=\rho^{A}\otimes\rho^{B}$ there is $I_{\rho^{AB}}\left(  A;B\right)  =0$ and
$N\left(  \varepsilon\right)  \geq\sum p_{X}S\left(  \rho_{X}^{A}\right)  $;
the average entanglement of the states in the ensemble is a lower bound of the
entanglement charge. This case will happen when we consider an ensemble
consisting of a full basis states with equal probability.

\begin{proof}
According to the definition of $N\left(  \varepsilon\right)  $ we should
consider distinguishing the states of the ensemble $\varepsilon^{\otimes
n}=\left\{  p_{X^{n}},\rho_{X^{n}}^{A^{n}B^{n}}\right\}  $ using LOCC plus
$n\times\alpha_{n}$ ebits of entanglement. It is equivalent to distinguish the
states of the ensemble $\left\{  p_{X^{n}},\rho_{X^{n}}^{A^{n}B^{n}}%
\otimes\Phi_{n}^{A_{0}B_{0}}\right\}  $ using LOCC only, where $\Phi
_{n}^{A_{0}B_{0}}$ is a bipartite pure state with $S\left(  \Phi_{n}^{A_{0}%
}\right)  =S\left(  \Phi_{n}^{B_{0}}\right)  =n\times\alpha_{n}$. The mutual
information between $X^{n}$ and the measurement result $Y$ will satisfy
\cite{badziag,upper}
\begin{align}
I\left(  X^{n};Y\right)   &  \leq n\left(  S\left(  \rho^{B}\right)  +S\left(
\rho^{A}\right)  -\sum p_{X}S\left(  \rho_{X}^{A}\right)  \right) \nonumber\\
&  +n\left(  \alpha_{n}-\sum p_{Y}\beta_{nY}\right)  .
\end{align}
Noting that $I\left(  X^{n};Y\right)  \geq I^{Global}\left(  \varepsilon
^{\otimes n}\right)  -\delta_{n}$ is required and there is $I^{Global}\left(
\varepsilon^{\otimes n}\right)  =nS\left(  \rho^{AB}\right)  $, we can get%
\begin{equation}
\alpha_{n}-\sum p_{Y}\beta_{nY}\geq\sum p_{X}S\left(  \rho_{X}^{A}\right)
-I_{\rho^{AB}}\left(  A;B\right)  -\delta_{n}/n. \label{er}%
\end{equation}
Since $\delta_{n}/n$ will go to zero in the asymptotic limit, we can get Eq.
(\ref{tt}) from Eq. (\ref{er}).
\end{proof}

The above two theorems give upper and lower bounds on $N\left(  \varepsilon
\right)  $. It is valuable to know when the upper and the lower bounds will be
close. We first rewrite the lower bound expression in theorem 2 as%
\begin{equation}
N\left(  \varepsilon\right)  \geq S\left(  A\left\vert B\right.  \right)
-\chi^{A}\left(  \varepsilon\right)  =S\left(  B\left\vert A\right.  \right)
-\chi^{B}\left(  \varepsilon\right)  ,
\end{equation}
where $\chi^{A}\left(  \varepsilon\right)  =S\left(  \rho^{A}\right)  -\sum
p_{X}S\left(  \rho_{X}^{A}\right)  $ and $\chi^{B}\left(  \varepsilon\right)
=S\left(  \rho^{B}\right)  -\sum p_{X}S\left(  \rho_{X}^{B}\right)  $ are the
Holevo information of the ensembles seen by Alice and Bob, respectively. As
for ensembles consisting of orthogonal pure states, it is not hard to find
that the contents of the two theorems can be summarized as
\begin{equation}
S\left(  A\left\vert B\right.  \right)  -\chi^{A}\left(  \varepsilon\right)
\leq N\left(  \varepsilon\right)  \leq S\left(  A\left\vert B\right.  \right)
\label{r1}%
\end{equation}
when $\chi^{A}\left(  \varepsilon\right)  \leq\chi^{B}\left(  \varepsilon
\right)  $ or
\begin{equation}
S\left(  B\left\vert A\right.  \right)  -\chi^{B}\left(  \varepsilon\right)
\leq N\left(  \varepsilon\right)  \leq S\left(  B\left\vert A\right.  \right)
\label{r2}%
\end{equation}
when $\chi^{B}\left(  \varepsilon\right)  \leq\chi^{A}\left(  \varepsilon
\right)  $. From Eqs. (\ref{r1}) and (\ref{r2}), we know that the upper and
the lower bounds will be closer if ever $\chi^{A}\left(  \varepsilon\right)  $
or $\chi^{B}\left(  \varepsilon\right)  $ is smaller. Noting that $\chi
^{A}\left(  \varepsilon\right)  $ is the upper bound of the information about
the value of $X$ that Alice can gain solely \cite{book}, Eq. (\ref{r1}) means
that the difference between the bounds will be small if Alice can gain little
information about the value of $X$ without cooperation with Bob. When
$\chi^{A}\left(  \varepsilon\right)  =0$ or $\chi^{B}\left(  \varepsilon
\right)  =0$, the exact value of $N\left(  \varepsilon\right)  $ can be
obtained. This occurs only when all $\rho_{X}^{A}$ (or $\rho_{X}^{B}$) are the
same. Consequently, we have the following result.

\begin{corollary}
Suppose that $\varepsilon=\left\{  p_{X},\rho_{X}^{AB}\right\}  $ is an
ensemble consisting of mutually orthogonal $d\times d$ maximally entangled
pure states. The entanglement charge will be%
\begin{equation}
N\left(  \varepsilon\right)  =S\left(  \rho^{AB}\right)  -S\left(  \rho
^{B}\right)  =H\left(  X\right)  -\log d, \label{rt}%
\end{equation}
where $H\left(  \cdot\right)  $ is the Shannon entropy, $\rho^{AB}=\sum
_{X}p_{X}\rho_{X}^{AB}$, and $\rho^{B}=Tr_{A}\rho^{AB}$.
\end{corollary}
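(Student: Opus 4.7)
The plan is to sandwich $N(\varepsilon)$ between the upper bound from Theorem~1 and the lower bound from Theorem~2; the whole argument hinges on showing that, for ensembles of maximally entangled pure states, the local Holevo information $\chi^A(\varepsilon)$ (or $\chi^B(\varepsilon)$) vanishes, which collapses the bracket in (\ref{r1})--(\ref{r2}) to a single number.

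First I would note that every $d\times d$ maximally entangled pure state has the maximally mixed reduced density matrices $\rho_X^A=\rho_X^B=I/d$. This immediately gives $\sum_X p_X S(\rho_X^A)=\log d$, and since $\rho^A=\sum_X p_X\rho_X^A=I/d$, also $S(\rho^A)=\log d$. Hence
\begin{equation}
\chi^A(\varepsilon)=S(\rho^A)-\sum_X p_X S(\rho_X^A)=0,
\end{equation}
and by the same computation $\chi^B(\varepsilon)=0$.

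Next I would feed this into the unified form (\ref{r1}) of Theorems~1 and 2 (either inequality works, since both Holevo quantities vanish). The lower bound becomes $S(A|B)-\chi^A(\varepsilon)=S(A|B)$, which coincides with the upper bound, so $N(\varepsilon)=S(A|B)=S(\rho^{AB})-S(\rho^B)$.

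Finally I would evaluate this closed form explicitly. Because the states are mutually orthogonal pure states, $\rho^{AB}=\sum_X p_X \rho_X^{AB}$ is a convex combination of orthogonal rank-one projectors with weights $p_X$, giving $S(\rho^{AB})=H(X)$. And $\rho^B=\sum_X p_X\rho_X^B=I/d$, so $S(\rho^B)=\log d$. Substituting yields $N(\varepsilon)=H(X)-\log d$, as claimed. There is no real obstacle here: the only thing one must verify carefully is that maximal entanglement makes the local ensembles $\{p_X,\rho_X^A\}$ (and $\{p_X,\rho_X^B\}$) trivial in the Holevo sense, which is exactly the condition identified in the paragraph preceding the corollary for the bounds to match.
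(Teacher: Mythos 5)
Your proposal is correct and follows essentially the same route as the paper: both observe that maximally entangled states have maximally mixed reductions, so $\chi^{A}(\varepsilon)=\chi^{B}(\varepsilon)=0$ and the upper and lower bounds of Theorems 1 and 2 coincide. Your explicit evaluation of $S(\rho^{AB})=H(X)$ and $S(\rho^{B})=\log d$ just fills in detail the paper leaves implicit.
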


The corollary is true since all $\rho_{X}^{A}=Tr_{B}\rho_{X}^{AB}$ and
$\rho_{X}^{B}=Tr_{A}\rho_{X}^{AB}$ are the same maximally mixed state, so
there are both $\chi^{A}\left(  \varepsilon\right)  =0$ and $\chi^{B}\left(
\varepsilon\right)  =0$; the upper and the lower bounds of $N\left(
\varepsilon\right)  $ becomes the same value. As is known, the four Bell
states cannot be distinguished by LOCC \cite{4bell}, however the corollary
indicates that the entanglement charge $N\left(  \varepsilon\right)  $ of an
ensemble consisting of these four states can be any value between $-1$ and $1$
dependent on their probabilities, so it may have the information nonlocality,
entanglement nonlocality or neither. The reason is that whether the states in
the ensemble $\varepsilon$ are LOCC indistinguishable depends only on its
states, while whether the ensemble has the information nonlocality or the
entanglement nonlocality depends not only on its states but also on the
probabilities of the states.

There are other ways to obtain upper bounds for $N\left(  \varepsilon\right)
$. For any ensemble $\varepsilon=\left\{  p_{X},\rho_{X}^{AB}\right\}  $ whose
states are mutually orthogonal, there exist nonlocal unitary operations
$U^{AB}$ such that the states of the ensemble $\bar{\varepsilon}=\left\{
p_{X},U^{AB}\rho_{X}^{AB}U^{\dag AB}\right\}  $ are LOCC distinguishable. For
an example we consider the ensemble $\varepsilon$ consisting of the following
states:
\begin{align}
&  U\left(  -\theta\right)  \left\vert 0\right\rangle _{A}\left\vert
0\right\rangle _{B},\quad U\left(  -\theta\right)  \left\vert 0\right\rangle
_{A}\left\vert 1\right\rangle _{B},\\
&  U\left(  -\theta\right)  \left\vert 1\right\rangle _{A}\left\vert
0\right\rangle _{B},\quad U\left(  -\theta\right)  \left\vert 1\right\rangle
_{A}\left\vert 1\right\rangle _{B},
\end{align}
where $U\left(  -\theta\right)  =\exp\left\{  -i\theta\sigma_{x}^{A}\sigma
_{x}^{B}\right\}  $. The similar example has appeared in Ref. \cite{wootters}
where the nonlocal measurements is discussed. The states have the same
entanglement
\begin{equation}
H\left(  \cos^{2}\theta\right)  =-\cos^{2}\theta\log_{2}\cos^{2}\theta
-\sin^{2}\theta\log_{2}\sin^{2}\theta.
\end{equation}
We first consider the case that the states have the equal probability. In this
case the upper bounds obtained from theorem 1 can be expressed as $N\left(
\varepsilon\right)  \leq1$, which is not satisfactory. Note that if Alice and
Bob implement $U\left(  \theta\right)  $ on $AB$ the changed states will be
LOCC distinguishable. The operation $U\left(  \theta\right)  $ can be
implemented through LOCC plus some entanglement. The average entanglement
$\bar{E}\left(  \theta\right)  $ needed to implement $U\left(  \theta\right)
$ is an upper bound of the entanglement charge $N\left(  \varepsilon\right)
$, i.e., $N\left(  \varepsilon\right)  \leq\bar{E}\left(  \theta\right)  $.
Several expressions for $\bar{E}\left(  \theta\right)  $ are given
\cite{wootters,ye2,cirac,berry}, and the one given in Ref. \cite{ye2} shows
that $\bar{E}\left(  \theta\right)  $ will be smaller than unit when
$2\theta\leq0.75$. It means when $2\theta\leq0.75$ the upper bound expression
for $N\left(  \varepsilon\right)  $ obtained by calculating the average
entanglement to implement $U\left(  \theta\right)  $ will be better than that
in theorem 1. However this may not be true when the states have different
probabilities $p_{X}$. To see this we note that $S\left(  \rho^{A}\right)
\geq\sum_{X}p_{X}S\left(  \rho_{X}^{A}\right)  =H\left(  \cos^{2}%
\theta\right)  $, so from theorem 1 there is%
\begin{equation}
N\left(  \varepsilon\right)  \leq S\left(  \rho^{AB}\right)  -S\left(
\rho^{A}\right)  \leq H\left(  X\right)  -H\left(  \cos^{2}\theta\right)  .
\end{equation}
The above upper bound for $N\left(  \varepsilon\right)  $ depends on the
probabilities $p_{X}$ and surely it will be smaller than $\bar{E}\left(
\theta\right)  $ when $H\left(  X\right)  $ is smaller than $\bar{E}\left(
\theta\right)  +H\left(  \cos^{2}\theta\right)  $.

\textit{Discussion.}--Ensembles consisting of mutually orthogonal bipartite
states can be classified into three categories according to the value of the
entanglement charge $N\left(  \varepsilon\right)  $: one has the information
nonlocality, one has the entanglement nonlocality and the third has neither.
For an ensemble, if it has the information nonlocality obviously the states in
it are LOCC indistinguishable, however the inverse may not be true. The reason
is that the value of the entanglement charge $N\left(  \varepsilon\right)  $
depends not only on the states in the ensemble $\varepsilon$ but also on the
probabilities of the states, while whether the states in the ensemble
$\varepsilon$ are LOCC indistinguishable does not depend on their probabilities.

The concepts of the information nonlocality and the entanglement nonlocality
can be extended to ensembles consisting of general bipartite quantum states
since the definition of the entanglement charge $N\left(  \varepsilon\right)
$ does not depend on the orthogonality of the states. The extension is
straightforward and also the entanglement charge $N\left(  \varepsilon\right)
$ can be used as a measure for them. The upper bound expressions in theorem 1
are also applicable for general ensembles but the lower bound expression in
theorem 2 needs to be changed into%
\begin{equation}
N\left(  \varepsilon\right)  \geq\sum p_{X}S\left(  \rho_{X}^{A}\right)
-I_{\rho^{AB}}\left(  A;B\right)  -\Delta\left(  \varepsilon\right)  ,
\end{equation}
where $\Delta\left(  \varepsilon\right)  =S\left(  \rho^{AB}\right)
-I^{Global}\left(  \varepsilon\right)  $, which can be obtained in the same
way as the states are mutually orthogonal.

When the states in the ensemble $\varepsilon=\left\{  p_{X},\rho_{X}%
^{AB}\right\}  $ are not mutually orthogonal, they cannot be distinguished
even through joint measurements and the mutual information $I^{Global}\left(
\varepsilon\right)  $ is the maximal information about the value of $X$ that
can be achieved through physical measurements. Nevertheless, the entanglement
charge $N\left(  \varepsilon\right)  $ still has its operational meaning. The
positive $N\left(  \varepsilon\right)  $ just quantifies the minimal
entanglement that is needed asymptotically in addition to LOCC to achieve
$I^{Global}\left(  \varepsilon\right)  $. When $N\left(  \varepsilon\right)  $
is negative, asymptotically LOCC can get the information $I^{Global}\left(
\varepsilon\right)  $, and additionally at most $\left\vert N\left(
\varepsilon\right)  \right\vert $ ebits of entanglement can be distilled.

\textit{Conclusion.}--We have introduced the entanglement charge as a measure
to quantify nonlocalities in ensembles consisting of bipartite quantum states.
We have estimated various upper and lower bounds for the entanglement charge
and evaluated the exact values for ensembles consisting of mutually orthogonal
maximally entangled bipartite pure states. The present work is expected to
evoke more profound understandings of nonlocalities in ensembles.

\textit{Acknowledgments.}--The work was supported by the RGC of Hong Kong
under HKU7051/06P, HKU7044/08P, the fund of Hebei Normal University, the
Foundation for Universities in Fujian Province (No. 2007F5041), and NSF-China
(Nos. 60878059, 10947147 and 10905016), and theNational Basic Research Program
of China (No. 2006CB921800).

\end{document}